\def\HH{\mathcal{H}}
\def\EE{\mathcal{E}}
\DeclareMathOperator{\tr}{Tr}
\newcommand{\kket}[1]{\vert #1 \rangle \rangle}
\newcommand{\bbra}[1]{\langle \langle #1 \vert}
\newtheorem{definition}{Definition}
\newtheorem{proposition}{Proposition}
\title{Choi representation of completely positive maps in brief}
\author{G\'abor Homa\thanks{HUN-REN Wigner Research Centre for Physics, P.~O.~Box 49, H-1525 Budapest, Hungary}
 \and Adrian Ortega\thanks{Departamento de F\'isica, Universidad de Guadalajara, Blvd. Gral. Marcelino Garc\'ia Barrag\'an 1421, C.P. 44430, Guadalajara, Jalisco, Mexico and HUN-REN Wigner Research Centre for Physics, P.~O.~Box 49, H-1525 Budapest, Hungary}
\and M\'aty\'as Koniorczyk\thanks{HUN-REN Wigner Research Centre for Physics, P.~O.~Box 49, H-1525 Budapest, Hungary}}
\begin{document}

\maketitle

\begin{abstract}
  The Choi representation of completely positive (CP) maps, i.e. quantum
  channels is often used in the context of quantum information and computation as it is easy to work with. It is a correspondence between CP maps and quantum states also termed as the Choi-Jamio\l kowski isomorphism. It is especially useful if a parametrization of the set of CP maps is needed in order to consider a general map or optimize over the set of these.  Here we provide a brief introduction to this topic, focusing on certain useful calculational techniques which are presented in full detail.
\end{abstract}

% REQUIRED
%\begin{keywords}
%completely positive maps, quantum channels, Choi-Jamio\l kowski isomorphism
%\end{keywords}

% REQUIRED
%\begin{MSCcodes}
%81Q99, 81P68, 47N50
%\end{MSCcodes}

Completely positive trace preserving maps are the
mathematical models of most general physical processes in quantum
mechanics: they describe the transformation of a density operator
describing the state of the system to another density operator
describing the state after any physically realistic process.

The mathematical foundations of quantum mechanics had been laid down by
von Neumann~\cite{Neumannbook1955}. Open quantum systems are important
subjects of quantum mechanics (c.f. e.g.\cite{Daviesbook1976}), and
especially in this theory it is a central question what change a
quantum state can undergo most generally, even when it is coupled to
its environment.  The seminal book of Kraus~\cite{Krausbook1983} is an
important reference with this respect, in which the formalism
describing quantum channels acquires a prominent role.

The importance of the topic can be further underlined by a number of
other references. For instance, in quantum communication theory, they
model the most general change to a quantum state that it can undergo
when passing a noisy or distorting channel. Quantum channels have a
wide reach in fields such as Quantum Computation and Quantum
Information~\cite{Nielsenbook2010, Benentibook2019}, Open Quantum
Systems~\cite{Breuerbook2007}, Quantum Measurement and
Control~\cite{Wisemanbook2009}, and is also a subject of mathematical
interest~\cite{Heinosaaribook2012, Watrousbook2018, Holevobook2019}.

The present brief tutorial has a rather limited and technical scope.
Namely, we assume that the reader is familiar with the Kraus
representation of completely positive maps~\cite{Krausbook1983}: this
can be considered as basic knowledge in the field. On this basis we
derive the Choi representation of completely positive
maps~\cite{Choi_1975} (aka Choi-Jamio\l kowski isomorphism) on a very technical level, providing all details
of the elementary calculations. Our goal is to help the reader to do
calculations with quantum channels in Choi representation. The main
benefit when compared to the Kraus representation is that the entire
set of completely positive maps can be covered by a convenient
parametrization, which can have benefits when considering general
channels or optimizing over their sets. The present notes are mainly
intended for those who address such a technical task.  We do not go
into deeper discussions of the Choi-Jamio\l kowski
isomorphism~\cite{Choi_1975,Jamiolkowski_1972}; there are excellent
deeper discussions of the topic in the literature, including recent
contributions, e.g.~\cite{2211.16533}.

%In the course of the discussion, it will become clear the operation
%advantage of Choi matrices when compared with the more ``natural'' (at
%least from the physicist's point of view) Kraus representation.
%The tutorial is designed for those that do not have the time and/or
%the interest in digesting the huge wealth of mathematical intricacies
%of Quantum Channels and desire to grasp and/or remember the basics of
%the Choi representation.

It can be used as a material in a lecture on Quantum Information for
advanced undergraduates in Physics or Mathematics or recently graduated
students; the discussion only requires as a background Linear Algebra,
Quantum Mechanics at a basic level, and some familiarity with Quantum
Channels. As such, it can also be used for self-study, as the
manuscript is self-contained. Yet, it is expected that the reader
participates actively when reading the manuscript, following closely
all the derivations we present. For applications and in order to
appreciate the material presented here, the reader is referred to the
citations written in this introduction, as well as, to the couple of
applications we have added at the end of this manuscript.
% These can be consumed on-demand, depending on the reader
%interests and/or urgencies.

\section{Notation, preliminaries}

Let $\HH _1$ and $\HH _2$ be finite-dimensional complex
Hilbert-spaces.  Let $\varrho$ denote a (Hermitian)
positive semidefinite operator of unit trace. (In physics,
such a $\varrho$ represent the most general state of a physical system,
whereas $\EE$ represent one of the channels a state can pass through.) The set of bounded operators mapping a Hilbert space $\HH$ into itself will be denoted by $ \mathcal{B}(\HH)$.

Further, let $\EE$ be a $\mathcal{B}(\HH_ 1) \mapsto \mathcal{B}(\HH_2)$ \emph{completely
  positive trace preserving} (CPTP) map. We assume that the reader is
aware that $\EE$ is CPTP iff it can be written in Kraus
representation: for every $\varrho: \HH _1 \mapsto \HH _1$ (Hermitian)
positive semidefinite operator,
\begin{eqnarray}
  \label{eq:Kraus}
  \EE (\varrho) = \sum_k A^{(k)} \varrho A^{(k)\dag}\\
  \sum_k A^{(k)\dag} A^{(k)} = \hat 1 _1,\label{eq:Kraus2}
\end{eqnarray}
where $A^{(k)}$-s are $\HH _1 \mapsto \HH _2$ linear operators, their
number can be arbitrary.
%\footnote{Using the identity $(ABC)^\dag=C^\dag B^\dag A^\dag$, it can be seen that $\EE (\varrho)$ is Hermitian (self adjoint),i.e., $\EE (\varrho) = \EE^\dag (\varrho)$, because $\varrho=\varrho^\dag$.}. 
The Kraus-representation of a CPTP map is not
unique; we will return to this, to some extent, later on. 

Note that it is easy to see that the condition in Eq.~\eqref{eq:Kraus2} is a necessary and sufficient condition for the map to be trace preserving. Indeed, using the linearity and the cyclic property of trace, and the associative property of matrix product,
\begin{eqnarray}
    \tr \EE (\varrho) = 
    \tr \left(\sum_k A^{(k)} \varrho A^{(k)\dag}\right) =
    \sum_k \tr \left( A^{(k)} \varrho A^{(k)\dag}\right)=\nonumber \\
    \sum_k \tr \left( A^{(k)\dag} A^{(k)} \varrho\right)=
     \tr \left( \left(\sum_k
     A^{(k)\dag} 
     A^{(k)} \right) \varrho\right).    
\end{eqnarray}
For this to be equal to $\tr(\varrho)$ for all $\varrho$, 
it is necessary to have $\sum_k A^{(k)\dag} A^{(k)} = \hat 1_1$, which is Eq.~\eqref{eq:Kraus2}.

We will use the Dirac notation for vectors, e.g.
$\ket{\psi}_1 \in \HH _1$, and the dual of $\ket{\psi}_1$ is
$\prescript{}{1}{\bra{\psi}}\in \HH _1'$. When $\HH _1$ is represented with coordinates
%\footnote{The $k$-th coordinate of a row vector  $(c_1,c_2,...,c_k,...,c_{n-1},c_n)$ and column vector $(c_1,c_2,...,c_k,...,c_{n-1},c_n)^\top$ is the complex number $c_k$.  Here $\top$ denotes the operation of transposition.} 
on an orthonormal basis (ONB) in $\mathbb{C}^N$, then $\ket{\psi}_1$ is a column vector, and $\prescript{}{1}{\bra{\psi}}=\ket{\psi}_1^\dag$ is its Hermitian conjugate: a row vector, which is, in a complex vector space, the complex conjugate of the transpose.
Further, let $\{\ket{j}\}_{j=1,\ldots, \dim \HH _1}$ and $\{\ket{i}\}_{i=1,\ldots, \dim \HH _2}$ be ONBs on $\HH _1$ and $\HH _2$, respectively.

First we define a vectorization of linear mappings and present some properties of this vectorization that will be useful later.
\begin{definition}
  \label{def:kett}
  For a linear operator $C: \HH _1 \mapsto \HH _2$, with the matrix
  \begin{equation}
    \label{eq:ChoivecInv}
    C_{i,j}=\prescript{}{2}{\bra{i}}C\ket{j}_1, 
   \end{equation}
  and for the ONBs $\{\ket{j}\}_{j=1,\ldots, \dim \HH _1}$ and $\{\ket{i}\}_{i=1,\ldots, \dim \HH _2}$, let
  \begin{equation}
    \label{eq:Choivec}
    \kket{C} := \sum_{i,j} C_{i,j} \ket{j}_1 \ket{i}_2:= \sum_{i,j} C_{i,j} \ket{j,i} \in \HH _1 \otimes \HH _2 
  \end{equation}
  be a vector in the product of the two spaces.  Here $\ket{j}_1 \ket{i}_2=\ket{j,i}$   stands for $\ket{j}_1 \otimes \ket{i}_2$, elements of an ONB
  spanning the product space.
\end{definition}
(Note that the "double ket" ($\kket{\dots}$) notation is common in the literature to denote vectorized operators, appearing in many works including early ones, c.f.~\cite{Royer_1991}.) The Hermitian conjugate of this vector is, by definition,
\begin{equation}
    \bbra{C} := \left(\kket{C}\right)^\dag=\sum_{i,j} \prescript{}{2}{\bra{i}} \prescript{}{1}{\bra{j}} C^{\ast}_{i,j} := \sum_{i,j} \bra{i,j} C^{\ast}_{i,j}  \in \left(\HH _1 \otimes \HH _2\right)'.
\end{equation}
That is, on an ONB the Hermitian conjugate is the transpose of the complex conjugate (the asterisk ($\mbox{}^\ast$) stands for complex conjugation). As it is a vector, the order of the indices of $C$ remain the same. Note the reverse order of the indices in the bra, i.e. the dual vector, which will be our convention. 
\begin{proposition}
  For the matrices of all linear operators
  \begin{align*}
    A:& \;\HH_1 \mapsto \HH_1 \\
    B:& \;\HH_2 \mapsto \HH_2 \\
    C:& \;\HH _1 \mapsto \HH _2,
  \end{align*}
  on the basis in which the vectorization in Def. \ref{def:kett} is defined,
  the identity
  \begin{equation}
    \label{eq:prop1stmt}
    (A\otimes B)\kket{C} = \kket{BCA^\top}
  \end{equation}
  holds.
\end{proposition}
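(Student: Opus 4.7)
The plan is to prove the identity by direct computation: expand $\kket{C}$ using Definition \ref{def:kett}, let $A \otimes B$ act entrywise on the tensor product basis, and then rearrange to recognize the result as $\kket{BCA^\top}$. The only slightly subtle point is to see where the transpose on $A$ comes from; this will arise automatically from the index contraction because $A$ acts on the factor labelled by the \emph{second} index of $C$, which is the one summed on the \emph{left} in the matrix product $C A^\top$.

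First I would write
\begin{equation*}
(A\otimes B)\kket{C} = \sum_{i,j} C_{i,j} \, (A\ket{j}_1)\otimes (B\ket{i}_2),
\end{equation*}
using linearity and the definition of the tensor product on factorized vectors. Then I would expand $A\ket{j}_1 = \sum_{j'} A_{j',j}\ket{j'}_1$ and $B\ket{i}_2 = \sum_{i'} B_{i',i}\ket{i'}_2$ on the respective ONBs, obtaining
\begin{equation*}
(A\otimes B)\kket{C} = \sum_{i,j,i',j'} B_{i',i}\, C_{i,j}\, A_{j',j}\, \ket{j'}_1\ket{i'}_2.
\end{equation*}

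Next I would compare with the right-hand side. By Definition \ref{def:kett},
\begin{equation*}
\kket{BCA^\top} = \sum_{i',j'} (BCA^\top)_{i',j'}\, \ket{j'}_1\ket{i'}_2,
\end{equation*}
and the matrix entry expands as
\begin{equation*}
(BCA^\top)_{i',j'} = \sum_{i,j} B_{i',i}\, C_{i,j}\, (A^\top)_{j,j'} = \sum_{i,j} B_{i',i}\, C_{i,j}\, A_{j',j}.
\end{equation*}
This is exactly the coefficient obtained above, so the two vectors coincide and the identity is proved.

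The main ``obstacle'' is purely notational: one has to keep straight that $A$ acts on the first tensor factor, whose basis label is the \emph{column} index of $C$, so the contraction $\sum_j C_{i,j} A_{j',j}$ pairs the second index of $C$ with the second index of $A$ — precisely the definition of $(CA^\top)_{i,j'}$. Beyond this bookkeeping, no further ideas are needed.
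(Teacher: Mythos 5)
Your proof is correct and follows essentially the same route as the paper: a direct coordinate computation in the fixed basis, expanding $(A\otimes B)\kket{C}$ and matching the resulting coefficients with the matrix entries $(BCA^\top)_{i',j'}$ that define $\kket{BCA^\top}$. The paper organizes the same calculation by projecting onto the basis bras $\bra{i,j}$ rather than expanding the kets, but the index bookkeeping — including the origin of the transpose on $A$ — is identical.
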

Recall that transposition is not basis invariant in complex spaces.
\begin{proof}
  We verify each coordinate of the vector on the two sides to be equal.
   By definition \ref{def:kett}, and with some algebra (note that we keep on using the same bases):
  \begin{eqnarray*}
    \forall (i,j):\\
     \bra{i,j} \left( (A\otimes B)|C\rangle\rangle \right) =
    \bra{i,j}\left( (A\otimes B) \sum_{k,l} C_{k,l}\ket{l,k} \right)=\\
    \sum_{k,l} C_{k,l} \prescript{}{1}{\bra{j}}A\ket{l}_1 \prescript{}{2}{\bra{i}} B\ket{k}_2 = \sum_{k,l} C_{k,l} A_{j,l} B_{i,k}=\\
    \sum_{k,l} B_{i,k} C_{k,l} A^\top_{l,j} = \left(BCA^\top\right)_{i,j}.
  \end{eqnarray*}
\end{proof}

For sake of consistency we describe here the notion of partial trace.
\begin{definition}
  The partial traces for linear operator
  $M: \HH_1\otimes \HH_2 \mapsto \HH_1\otimes \HH_2$
   are defined as
  \begin{equation}
    \label{eq:ptrace2}
    \prescript{}{1}{\bra{i}} \left(M^{(1)}\right) \ket{j}_1 = \left(\tr_2 M \right)_{i,j} =
      \sum_k  \prescript{}{2}{\bra{k}}\prescript{}{1}{\bra{i}} M \ket{j}_1\ket{k}_2 =M^{(1)}_{i,j},
    \end{equation}
    that is, $M^{(1)}$ is a $\HH_1 \mapsto \HH_1$ linear operator with the matrix
    \begin{equation}
      \label{eq:ptrace2detail}
      M^{(1)}_{i,j}=\sum_k M_{ik,jk}.
    \end{equation}
    We say $M^{(1)}$ is obtained by \emph{tracing out} $M$ on $\HH_2$. Similarly we have
    $M^{(2)}=\tr_1 M$ which is a $\HH_2 \mapsto \HH_2$ so that
    \begin{equation}
      \label{eq:ptrace2detail2}
      M^{(2)}_{k,l}=\sum_i M_{ik,il}.
    \end{equation}
\end{definition}
    (The index pairs not separated with a comma, e.g. $ik$ in Eq.~\eqref{eq:ptrace2detail} emphasize that they belong to an element of a direct product basis, while the comma separation in $M$'s indices stand here for to separate matrix indices. We use this convention whenever the matrix interpretation is relevant, albeit it is not always possible to do it consistently.)
Just like the 'complete' trace, it is basis-independent, because the
partial trace is a unique linear map such that
$\tr_2(A\otimes B)=A\,\mathrm{tr}(B)$, and the trace is
basis-independent. The partial trace is often used in
physics. Notably, given two Hermitian operators $M_A: \HH_1 \mapsto \HH_1$
and $M_B: \HH_2 \mapsto \HH_2$ both with unit trace, 
\begin{eqnarray}
  \label{eq:prodinv}
  \tr_2 (M_A \otimes M_B) = M_A  \nonumber \\
  \tr_1 (M_A \otimes M_B) = M_B  
\end{eqnarray}
holds.  Hence, for Hermitian
operators the partial trace is the ``inverse'' of the direct product. 

\begin{proposition}
\label{prop_2}
  $\forall C_1,C_2 : \HH _1 \mapsto \HH _2$ pairs of linear operators,
  \begin{equation}
    \label{eq:prop2stmt}
    \tr_1 \left( \kket{C_1}\bbra{C_2} \right) = C_1 C_2^\dag
  \end{equation}
  holds.
\end{proposition}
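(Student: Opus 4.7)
The plan is to verify the identity entrywise in the product ONB fixed by Definition~\ref{def:kett}. First, I would expand both vectors using that definition, obtaining
\[
\kket{C_1}\bbra{C_2} = \sum_{i,j,k,l} (C_1)_{i,j}\,(C_2)^*_{k,l}\,\ket{j,i}\bra{l,k},
\]
with $j,l$ running over an ONB of $\HH_1$ and $i,k$ over an ONB of $\HH_2$. Reading off the tensor-product matrix elements then gives $\bigl(\kket{C_1}\bbra{C_2}\bigr)_{\alpha\beta,\gamma\delta} = (C_1)_{\beta,\alpha}(C_2)^*_{\delta,\gamma}$, where $\alpha,\gamma$ are $\HH_1$ labels and $\beta,\delta$ are $\HH_2$ labels.

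Next I would apply the partial-trace formula \eqref{eq:ptrace2detail2}, which sums over the $\HH_1$ index on the diagonal. Setting $\alpha=\gamma$ and summing yields
\[
\bigl(\tr_1 \kket{C_1}\bbra{C_2}\bigr)_{\beta,\delta}=\sum_\alpha (C_1)_{\beta,\alpha}(C_2)^*_{\delta,\alpha}.
\]
The final step is to rewrite $(C_2)^*_{\delta,\alpha}=(C_2^\dag)_{\alpha,\delta}$ and recognize the sum as the matrix product $(C_1 C_2^\dag)_{\beta,\delta}$, which matches the right-hand side of \eqref{eq:prop2stmt}. Since $\beta,\delta$ both label $\HH_2$ vectors, the resulting operator is correctly of type $\HH_2\mapsto\HH_2$, consistent with $C_1 C_2^\dag$.

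The main obstacle is bookkeeping rather than any real mathematical difficulty. The convention in Definition~\ref{def:kett} orders the tensor factors as $\HH_1\otimes\HH_2$ but labels the coefficient of $\ket{j,i}$ as $C_{i,j}$, with $i$ the $\HH_2$ row index and $j$ the $\HH_1$ column index; one must also remember that \eqref{eq:ptrace2detail2} contracts exactly the $\HH_1$ slot, which is precisely the slot on which the column index of $C_1$ and the row index of $C_2^\dag$ need to meet in order to produce $C_1 C_2^\dag$. Once indices are carefully tracked, the whole argument is a single chain of equalities, so I do not anticipate any deeper difficulty.
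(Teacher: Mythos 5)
Your proposal is correct and follows essentially the same route as the paper: an entrywise verification that expands $\kket{C_1}\bbra{C_2}$ via Definition~\ref{def:kett}, contracts the $\HH_1$ index using the partial-trace formula \eqref{eq:ptrace2detail2}, and identifies $\sum_\alpha (C_1)_{\beta,\alpha}(C_2^\dag)_{\alpha,\delta}=(C_1C_2^\dag)_{\beta,\delta}$. The only cosmetic difference is that you first write out the full operator in the product basis before tracing, whereas the paper inserts the basis bras and kets directly into the matrix element of the partial trace; the index bookkeeping is identical.
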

\begin{proof}
  By the definition of our vectorization, 
\begin{eqnarray}
    \label{eq:thm2proof}
    \left(\forall (k,l)\right)\quad
    \left(\tr_1 \left( \kket{C_1}\bbra{C_2}\right)\right)_{k,l} =
    \sum_j \left( \kket{C_1}\bbra{C_2}\right)_{jk,jl}\nonumber \\
    \sum_j \left( \prescript{}{2}{\bra{k}}\prescript{}{1}{\bra{j}}
    \kket{C_1}\right)\left(\bbra{C_2} 
    \ket{j}_1\ket{l}_2 \right)
    = \sum_j (C_1)_{k,j}(C^\ast_2)_{l,j}=
    \nonumber \\
    \sum_j (C_1)_{k,j}(C_2^\dag)_{j,l}=
    \left(  C_1 C_2^\dag\right)_{k,l}.
  \end{eqnarray} 
\end{proof}

\section{Definition and some properties of the Choi representation}

After these preliminaries now let us define the Choi matrix of a completely positive map.

\begin{definition}
  \label{def:choi}
 For $\EE$ operator, a complete positive $\mathcal{B}(\HH_ 1) \mapsto \mathcal{B}(\HH _2)$, given in a
Kraus representation (c.f. Eq.~\eqref{eq:Kraus}), its Choi matrix is the
following Hermitian $\HH_1\otimes \HH_2 \mapsto \HH_1 \otimes \HH_2$ matrix:
\begin{equation}
  \label{eq:Choidef}
  X_\EE = \sum_k \kket{A^{(k)}}\bbra{A^{(k)}}.
\end{equation} 
\end{definition}
(Note that $X_\EE$ is Hermitian as it is a sum of rank-one projectors.)

One of the convenient features of the Choi representation is that it is easy to prove the following statement.

\begin{proposition}
\label{prop_3}
  $\forall \varrho \in \mathcal{B}(\HH_1)$ operators,
  \begin{equation}
   \label{eq:choiact}
    \EE(\varrho)=\tr_1 \left( \left(\varrho^\top\otimes \hat 1_2\right) X_\EE \right)    \in \mathcal{B}(\HH_2).
  \end{equation}
\end{proposition}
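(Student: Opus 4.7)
The plan is to start from the right-hand side of Eq.~\eqref{eq:choiact}, substitute the definition of $X_\EE$ from Eq.~\eqref{eq:Choidef}, and reduce to $\EE(\varrho)$ via two successive applications of the propositions already proved. Concretely, after substituting $X_\EE=\sum_k\kket{A^{(k)}}\bbra{A^{(k)}}$ and using linearity of the partial trace, the task is to simplify
\[
\sum_k \tr_1\!\left(\left(\varrho^\top\otimes \hat 1_2\right)\kket{A^{(k)}}\bbra{A^{(k)}}\right).
\]

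The first key step is to apply Proposition~1 (Eq.~\eqref{eq:prop1stmt}) to the factor $(\varrho^\top\otimes \hat 1_2)\kket{A^{(k)}}$, with $A:=\varrho^\top$, $B:=\hat 1_2$ and $C:=A^{(k)}$. This should yield $\kket{\hat 1_2\, A^{(k)}\,(\varrho^\top)^\top}=\kket{A^{(k)}\varrho}$, where I would pause briefly to note that $(\varrho^\top)^\top=\varrho$ (so no Hermiticity or positivity of $\varrho$ is used — consistent with the statement being asserted for arbitrary $\varrho\in\mathcal{B}(\HH_1)$).

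The second key step is to apply Proposition~\ref{prop_2} (Eq.~\eqref{eq:prop2stmt}) with $C_1:=A^{(k)}\varrho$ and $C_2:=A^{(k)}$. This gives $\tr_1\!\left(\kket{A^{(k)}\varrho}\bbra{A^{(k)}}\right)=A^{(k)}\varrho\, A^{(k)\dag}$. Summing over $k$ and invoking the Kraus form Eq.~\eqref{eq:Kraus} of $\EE$ then delivers $\EE(\varrho)$, finishing the proof.

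The only place requiring care is the transpose bookkeeping in the first step: one must identify $A=\varrho^\top$ (not $\varrho$) in Proposition~1 so that the resulting double-ket contains $\varrho$ itself rather than $\varrho^\top$. Nothing here is subtle beyond that — the rest is a direct chaining of the two earlier identities, and a remark that since the statement holds for every Kraus decomposition of $\EE$, the right-hand side of Eq.~\eqref{eq:choiact} is automatically independent of the particular decomposition used to define $X_\EE$ (a consistency check rather than a difficulty).
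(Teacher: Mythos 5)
Your proposal is correct and follows essentially the same route as the paper's own proof: substitute the definition of $X_\EE$, use linearity of the partial trace, apply Eq.~\eqref{eq:prop1stmt} with $A=\varrho^\top$, $B=\hat 1_2$, $C=A^{(k)}$ to get $\kket{A^{(k)}\varrho}$, and then Eq.~\eqref{eq:prop2stmt} to recover the Kraus form $\sum_k A^{(k)}\varrho A^{(k)\dag}=\EE(\varrho)$. The transpose bookkeeping you highlight is exactly the point where the paper's argument also relies on identifying $A=\varrho^\top$, so nothing is missing.
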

\begin{proof}
  By the definition in Eq.~\eqref{eq:Choidef} and the linearity of partial trace we have
  \begin{equation}
    \EE(\varrho)=\tr_1 \left( \left(\varrho^\top\otimes \hat 1_2\right) X_\EE \right) =
    \sum_k \tr_1\left( \left(\varrho^\top\otimes \hat 1_2\right)\kket{A^{(k)}}\bbra{A^{(k)}}  \right).
  \end{equation}
  Using Eq.~\eqref{eq:prop1stmt}, this can be written as
  \begin{equation}
    \sum_k \tr_1\left( \left(\varrho^\top\otimes \hat 1_2\right)\kket{A^{(k)}}\bbra{A^{(k)}}  \right)=
    \sum_k \tr_1\left( \kket{\hat 1_2A^{(k)}\varrho}\bbra{A^{(k)}}\right),
  \end{equation}
  which, using Eq.~\eqref{eq:prop2stmt} and linearity, gives
  \begin{equation}
    \sum_k \tr_1\left( \kket{\hat 1_2A^{(k)}\varrho}\bbra{A^{(k)}}\right)=\sum_k A^{(k)}\varrho A^{(k)\dag}=\EE(\varrho).
  \end{equation}
\end{proof}

Given a CPTP map $\EE$, it is possible to give an alternative definition of its Choi matrix, which is in fact its original definition~\cite{Choi_1975}:
\begin{definition}
  \begin{equation}
X_\EE:=\sum_{p,q} E^{p,q}  \otimes \EE(E^{p,q})  \label{eq:Choialtdef}
\end{equation}
where
\begin{equation}
  \label{eq:E}
  E^{p,q}: \ \HH^1 \mapsto \HH^1,\qquad
  \left(E^{p,q}\right)_{j,m}=\delta_{j,p}\delta_{m,q}.
\end{equation}
%({\color{red}$\delta$} stands for the Kronecker delta).
\end{definition}
\begin{proposition}
 Definitions in Eq.~\eqref{eq:Choidef} and Eq.~\eqref{eq:Choialtdef} are equivalent.
\end{proposition}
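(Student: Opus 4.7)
The plan is to verify the equality by comparing matrix elements of the two expressions in the product basis $\{\ket{j,i}\}$ of $\HH_1\otimes \HH_2$. Fix a row label $(j,i)$ and column label $(l,m)$ (with $j,l$ indexing $\HH_1$ and $i,m$ indexing $\HH_2$); I will show that both sides yield the same entry, namely $\sum_k A^{(k)}_{i,j}\,(A^{(k)})^*_{m,l}$.

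For Eq.~\eqref{eq:Choidef}, the verification is essentially immediate: by Definition~\ref{def:kett}, the $(j,i)$-th coordinate of $\kket{A^{(k)}}$ equals $A^{(k)}_{i,j}$, and the $(l,m)$-th coordinate of $\bbra{A^{(k)}}$ is $(A^{(k)})^*_{m,l}$. Forming the outer product $\kket{A^{(k)}}\bbra{A^{(k)}}$ and summing over $k$ directly produces the target expression.

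For Eq.~\eqref{eq:Choialtdef}, I would first use $(E^{p,q})_{j,l}=\delta_{j,p}\delta_{l,q}$ to collapse the double sum over $p,q$: the first tensor factor $E^{p,q}$ forces $p=j$ and $q=l$, reducing the matrix element to $(\EE(E^{j,l}))_{i,m}$. Next I plug in the Kraus form $\EE(E^{j,l})=\sum_k A^{(k)} E^{j,l} A^{(k)\dag}$ and contract against the single nonzero entry of $E^{j,l}$; this reduces each Kraus summand to $A^{(k)}_{i,j}\,A^{(k)\dag}_{l,m}=A^{(k)}_{i,j}\,(A^{(k)})^*_{m,l}$, matching the first calculation termwise.

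The principal (and only) obstacle is keeping the bookkeeping straight: the vectorization convention in Def.~\ref{def:kett} places the $\HH_2$-index first in the symbol $C_{i,j}$ but second in the composite basis label $\ket{j,i}$, and one must also correctly handle the transposition implicit in $A^{(k)\dag}_{l,m}=(A^{(k)})^*_{m,l}$. With these two conventions in mind the derivation becomes routine. An alternative, more structural route via Proposition~\ref{prop_3} (plugging $\varrho=E^{q,p}$ so that $\varrho^\top=E^{p,q}$ to relate the two expressions through a partial trace) is conceivable, but appears more cumbersome than the direct element-wise check and is also somewhat circular since Proposition~\ref{prop_3} was established starting from Eq.~\eqref{eq:Choidef}.
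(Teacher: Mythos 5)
Your proposal is correct and takes essentially the same approach as the paper: both arguments verify the identity entrywise in the product basis, reading off the coordinates of $\kket{A^{(k)}}\bbra{A^{(k)}}$ from Definition~\ref{def:kett} for Eq.~\eqref{eq:Choidef}, and collapsing the sum over $p,q$ with the Kronecker deltas and then inserting the Kraus form of $\EE$ for Eq.~\eqref{eq:Choialtdef}, arriving at $\sum_k A^{(k)}_{i,j}A^{(k)\ast}$ with matching indices on both sides. The differences are purely notational (your labeling of the column indices), so nothing further is needed.
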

\begin{proof}
Let us consider the Kraus represention of $\EE$.
  By the definition of our vectorization in Eq.~\eqref{eq:Choivec} we have, $\forall (i,j,l,m)$
  \begin{eqnarray}
      \prescript{}{2}{\bra{i}}\prescript{}{1}{\bra{j}} \left( \sum_k \kket{A^{(k)}}\bbra{A^{(k)}}\right) \ket{m}_1\ket{l}_2=\nonumber \\
     \sum_k \left( \prescript{}{2}{\bra{i}}\prescript{}{1}{\bra{j}} \kket{A^{(k)}} \right) 
     \left( \bbra{A^{(k)}} \ket{m}_1\ket{l}_2\right)
    =\sum_k A^{(k)}_{i,j}  A^{(k)\ast}_{l,m}.
  \end{eqnarray}
  On the other hand, 
  \begin{eqnarray}
    \prescript{}{2}{\bra{i}}\prescript{}{1}{\bra{j}} \left(
    \sum_{p,q} E^{p,q} \otimes \EE(E^{p,q})    \right) \ket{m}_1\ket{l}_2=
    \sum_{p,q} \left(\prescript{}{1}{\bra{j}}E^{p,q}\ket{m}_1\right) 
    \left(\prescript{}{2}{\bra{i}}\EE(E^{p,q})\ket{l}_2\right)\nonumber \\
    = \sum_{p,q} \delta_{j,p}\delta_{m,q}
    \prescript{}{2}{\bra{i}}\EE(E^{p,q})\ket{l}_2= 
    \prescript{}{2}{\bra{i}}\EE(E^{j,m})\ket{l}_2=
    \EE(E^{j,m})_{i,l} \nonumber \\
    \left(\sum_k A^{(k)} E^{j,m}A^{(k)\dag}  \right)_{i,l}= \sum_{k,r,s} 
    A^{(k)}_{i,r} E^{j,m}_{r,s}A^{(k)\dag}_{s,l}=
     \sum_{k,r,s} 
    A^{(k)}_{i,r} \delta_{j,r}\delta_{m,s} A^{(k)\dag}_{s,l} = \sum_k A^{(k)}_{i,j}  A^{(k)\ast}_{l,m}. \nonumber \\
      \end{eqnarray}
\end{proof}

It can be shown that the Choi matrix of a CPTP map as defined in
Definition~\ref{def:choi} is positive semidefinite. And the converse
is also true: an arbitrary Hermitian positive semidefinite
operator $X$ represents a CPTP map as its Choi matrix in the sense 
that $\EE(\varrho)$ can be calculated according to
Eq.~\eqref{eq:choiact}.  This follows from the following two
propositions.

\begin{proposition}
\label{prop_5}
The Choi matrix $X_{\EE}$ of a CPTP map as defined in Eq.~\eqref{eq:Choidef} is positive semidefinite.
\end{proposition}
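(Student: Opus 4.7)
The plan is to observe that $X_\EE$ is written explicitly as a sum of rank-one operators of the form $\kket{A^{(k)}}\bbra{A^{(k)}}$, and that each such operator is manifestly positive semidefinite because $\kket{A^{(k)}}$ is by construction an ordinary vector in $\HH_1 \otimes \HH_2$ (by Definition~\ref{def:kett}) and $\bbra{A^{(k)}}$ is its Hermitian conjugate. So the whole statement reduces to the well-known fact that $\ket{v}\bra{v}$ is PSD for any vector $\ket{v}$, together with the fact that a nonnegative sum of PSD operators is PSD.

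Concretely, I would take an arbitrary $\ket{\psi} \in \HH_1 \otimes \HH_2$ and evaluate
\begin{equation*}
  \bra{\psi} X_\EE \ket{\psi} = \sum_k \bra{\psi}\kket{A^{(k)}}\bbra{A^{(k)}}\ket{\psi} = \sum_k \bigl| \bbra{A^{(k)}} \ket{\psi} \bigr|^2 \geq 0,
\end{equation*}
where the middle equality uses the definition $\bbra{A^{(k)}} = (\kket{A^{(k)}})^\dag$ so that $\bra{\psi}\kket{A^{(k)}}$ and $\bbra{A^{(k)}}\ket{\psi}$ are complex conjugates of each other. Since this holds for every $\ket{\psi}$, the operator $X_\EE$ is PSD.

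The argument does not actually use the trace-preserving condition (Eq.~\eqref{eq:Kraus2}), and it does not rely on the specific choice of Kraus representation; only the existence of some Kraus decomposition of $\EE$ (which encodes complete positivity) is needed. There is no real obstacle here: the main point is simply recognizing that the definition in Eq.~\eqref{eq:Choidef} already displays $X_\EE$ as a manifestly PSD object, a Gram-matrix-like sum of outer products of vectors with themselves. If anything, the only subtle bookkeeping is checking that $\bbra{A^{(k)}}\ket{\psi}$ is indeed the complex conjugate of $\bra{\psi}\kket{A^{(k)}}$, which is immediate from the definition of $\bbra{\cdot}$ as Hermitian conjugation on $\HH_1 \otimes \HH_2$.
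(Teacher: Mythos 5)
Your proof is correct and is essentially the paper's argument: both reduce to evaluating the quadratic form of $X_\EE$ on a vector and recognizing it as a sum of absolute squares, $\sum_k \bigl|\bbra{A^{(k)}}\ket{\psi}\bigr|^2 \geq 0$, using only the existence of a Kraus decomposition and not the trace-preserving condition. The paper merely packages this via a spectral decomposition of $X_\EE$ (testing against its eigenvectors to conclude $\lambda_k \geq 0$), whereas you test against an arbitrary $\ket{\psi}$, which is the same computation stated slightly more directly.
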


\begin{proof}
Consider a spectral decomposition of $X_{\EE}$
\begin{equation}
\label{eq: Choi_decomposition}
X_{\EE}  = \sum_k\lambda_k \ket{\Psi_k}\bra{\Psi_k},
\end{equation}
where $(\ket{\Psi_k})$ is an ONB and $\lambda_k$ are the eigenvalues
of operator $X_{\EE}$ (degenerate ones are repeated respectively). We
have to show that $\lambda_k\geq 0$:
\begin{eqnarray}
\lambda_k= \bra{\Psi_k} X_\EE\ket{\Psi_k}  =  \sum_k \bra{\Psi_k} \left(\kket{A^{(k)}}\bbra{A^{(k)}} \right)\ket{\Psi_k}  \nonumber \\
=  \sum_k  \bra{\Psi_k} \sum_{i,j} A_{i,j}^{(k)} \ket{j,i} \bra{i,j} \sum_{i,j}  A_{i,j}^{(k)\dag} \ket{\Psi_k}   \nonumber \\
=  \sum_{k,i,j} \left| \bra{\Psi_k} A^{(k)}_{i,j} \ket{i,j} \right|^2  \geq 0.
 \end{eqnarray}
\end{proof}

Assume now that we are given a Hermitian positive semidefinite matrix
which is supposed to be a Choi matrix of a completely positive trace
preserving map according to Eq.\eqref{eq:choiact}. We will prove the
complete positivity later, but first we check the condition for being
trace preserving.

\begin{proposition}
\label{prop_6}
  Given $X\geq 0: \HH_1 \otimes \HH_2 \mapsto \HH_1 \otimes \HH_2$
  Hermitian matrix, it is the Choi representation of a trace
  preserving map iff
  \begin{equation}
    \label{eq:tpcond}
    \tr_2 X =  \hat 1 _1.
  \end{equation}
  \begin{proof}
  Let $\EE$ be the corresponding CP map according to Eq.~\eqref{eq:choiact}.
  and $|n,m\rangle$ a complete ONB. Then we have
   \begin{eqnarray}
      \tr \EE(\varrho) = \tr \left(\tr_1 \left( \left(\varrho^\top\otimes \hat 1_2\right) X_\EE \right) \right)
    = \sum_{i,j} \left( 
       \left( \varrho^\top \otimes \hat 1_2  \right) X_\EE
    \right)_{ij,ij} =
    \sum_{i,j,k,l} 
       \left( \varrho^\top \otimes \hat 1_2  \right)_{ij,kl} X_{\EE\, kl,ij}
      \nonumber \\
      = \sum_{i,j,k,l} \varrho^\top_{i,k}\delta_{j,l}X_{kl,ij}=\sum_{i,j,k,l} \varrho_{k,i} \delta_{j,l}X_{kl,ij}
      = \sum_{i,k,l} \varrho_{k,i} X_{kl,il} = \sum_{i,k} \varrho_{k,i}\left(\tr_2 (X)\right)_{k,i}.
      \label{eq:tpcondproof}
    \end{eqnarray}
    This is equal to $\tr(\varrho)$ iff $\left(\tr_2 (X)\right)_{k,i}=\delta_{k,i}$, that is, $\tr_2 X = \hat 1_1$.
  \end{proof}
\end{proposition}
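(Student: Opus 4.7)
The plan is to apply the explicit action formula from Proposition~\ref{prop_3}, take the total trace, and then demand the trace-preservation condition $\tr\EE(\varrho) = \tr\varrho$ for all $\varrho$ in order to pin down $\tr_2 X$. Since Proposition~\ref{prop_3} gives $\EE(\varrho) = \tr_1\bigl((\varrho^\top \otimes \hat 1_2) X\bigr)$, applying a further trace over $\HH_2$ and using the fact that a partial trace followed by the remaining trace equals the total trace on $\HH_1 \otimes \HH_2$, we obtain $\tr\EE(\varrho) = \tr\bigl((\varrho^\top \otimes \hat 1_2) X\bigr)$.

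The core step is to recognize the general identity $\tr\bigl((Y \otimes \hat 1_2) X\bigr) = \tr\bigl(Y \,\tr_2 X\bigr)$ for every $Y:\HH_1 \mapsto \HH_1$. I would establish this by expanding the left-hand side on the product basis $\ket{j,i}$: the factor $\hat 1_2$ contributes a $\delta_{j,l}$ which collapses the summation on the $\HH_2$ indices into exactly the definition of $\tr_2 X$ given in Eq.~\eqref{eq:ptrace2detail}. Applying this with $Y = \varrho^\top$ yields
\begin{equation*}
\tr\EE(\varrho) = \tr\bigl(\varrho^\top \,\tr_2 X\bigr).
\end{equation*}

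The forward direction then follows by demanding equality with $\tr\varrho = \tr\varrho^\top$ and extending by linearity from density operators to all of $\mathcal{B}(\HH_1)$ (they span it), so that $\tr\bigl(\varrho^\top(\tr_2 X - \hat 1_1)\bigr) = 0$ holds for every operator $\varrho^\top$; non-degeneracy of the Hilbert--Schmidt pairing, or equivalently testing against the matrix units $E^{p,q}$ introduced in Eq.~\eqref{eq:E}, forces $\tr_2 X = \hat 1_1$. The converse direction is immediate by running the chain of equalities backwards: substituting $\tr_2 X = \hat 1_1$ gives $\tr\EE(\varrho) = \tr(\varrho^\top \hat 1_1) = \tr\varrho$. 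No substantial obstacle is anticipated; the computation is index bookkeeping and the only point requiring care is not dropping the transpose on $\varrho$ prematurely, even though it becomes harmless at the end because $\hat 1_1$ is invariant under transposition.
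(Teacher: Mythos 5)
Your proposal is correct and follows essentially the same route as the paper: apply the action formula of Proposition~\ref{prop_3}, take the total trace, and collapse the $\delta_{j,l}$ from $\hat 1_2$ in the product-basis expansion to reach $\tr\EE(\varrho)=\tr\bigl(\varrho^\top\,\tr_2 X\bigr)$, which must equal $\tr\varrho$ for all $\varrho$. Your explicit appeal to the non-degeneracy of the Hilbert--Schmidt pairing (or matrix units) only spells out the ``iff for all $\varrho$'' step that the paper leaves implicit.
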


\begin{proposition}
  For a Hermitian positive semidefinite
  $\HH_1\otimes \HH_2 \mapsto \HH_1 \otimes \HH_2$ operator $X$ with
  property $\tr_2 X =  \hat 1 _1$, the mapping $\EE$ defined using Eq.~\eqref{eq:choiact}
  obeys a Kraus representation.
\end{proposition}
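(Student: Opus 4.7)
The plan is to extract a Kraus representation from the spectral data of $X$. Since $X$ is Hermitian and positive semidefinite on $\HH_1 \otimes \HH_2$, it admits a spectral decomposition $X = \sum_k \lambda_k \ket{\Psi_k}\bra{\Psi_k}$ with $\lambda_k \geq 0$ and $\{\ket{\Psi_k}\}$ an ONB of $\HH_1 \otimes \HH_2$. I would then exploit the fact that the vectorization $C \mapsto \kket{C}$ of Definition \ref{def:kett} is a linear bijection between the space of $\HH_1 \mapsto \HH_2$ linear operators and $\HH_1 \otimes \HH_2$: in the chosen ONBs it merely reshuffles matrix entries into vector coordinates. This allows me to define operators $A^{(k)} : \HH_1 \mapsto \HH_2$ through $\kket{A^{(k)}} := \sqrt{\lambda_k}\,\ket{\Psi_k}$, so that
\begin{equation*}
  X = \sum_k \kket{A^{(k)}}\bbra{A^{(k)}},
\end{equation*}
which is exactly the form of Definition \ref{def:choi}.

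With $X$ recast in this form, I would re-run the computation already carried out in Proposition \ref{prop_3}: Eq.~\eqref{eq:prop1stmt} converts $(\varrho^\top \otimes \hat 1_2)\kket{A^{(k)}}$ into $\kket{A^{(k)}\varrho}$, and then Eq.~\eqref{eq:prop2stmt} produces $\tr_1\bigl(\kket{A^{(k)}\varrho}\bbra{A^{(k)}}\bigr) = A^{(k)}\varrho A^{(k)\dag}$. Summing in $k$ gives
\begin{equation*}
  \EE(\varrho) = \sum_k A^{(k)} \varrho A^{(k)\dag},
\end{equation*}
which is precisely the Kraus form Eq.~\eqref{eq:Kraus}. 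Since the $A^{(k)}$'s are $\HH_1 \mapsto \HH_2$ linear operators as required, only the normalization condition Eq.~\eqref{eq:Kraus2} remains to be verified.

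For the normalization, the cleanest route is to invoke Proposition \ref{prop_6}: by hypothesis $\tr_2 X = \hat 1_1$, so $\EE$ is trace preserving, i.e.\ $\tr\EE(\varrho) = \tr\varrho$ for every $\varrho \in \mathcal{B}(\HH_1)$. Feeding the Kraus expression just obtained into the elementary cyclicity-of-trace computation from the introduction yields $\tr\EE(\varrho) = \tr\bigl(\bigl(\sum_k A^{(k)\dag}A^{(k)}\bigr)\varrho\bigr)$. Since this must equal $\tr(\hat 1_1 \varrho)$ for all $\varrho$, one concludes $\sum_k A^{(k)\dag}A^{(k)} = \hat 1_1$, which is Eq.~\eqref{eq:Kraus2}, completing the Kraus representation.

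The only conceptual step worth highlighting is the vectorization bijection: one has to be comfortable with the fact that every vector in $\HH_1 \otimes \HH_2$ arises uniquely as $\kket{A}$ for some linear $A : \HH_1 \mapsto \HH_2$. Beyond that, the argument is essentially a recycling of Proposition \ref{prop_3} for the Kraus form and Proposition \ref{prop_6} for the normalization, so I do not anticipate any genuine obstacle.
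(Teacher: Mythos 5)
Your proposal is correct and follows essentially the same route as the paper: spectral decomposition of $X$, defining the Kraus operators by inverting the vectorization on $\sqrt{\lambda_k}\ket{\Psi_k}$, reusing the computation of Proposition~\ref{prop_3} to obtain the Kraus form, and deducing the normalization Eq.~\eqref{eq:Kraus2} from trace preservation via Proposition~\ref{prop_6}. No gaps to report.
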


\begin{proof}
  Let $X$ an arbitrary Hermitian positive semidefinite operator with property $\tr_2 X =  \hat 1 _1$.
  Let us write it in a spectral decomposition
  \begin{equation}
    \label{eq:xspectr}
    X  = \sum_j\kappa_k \ket{\Phi_k}\bra{\Phi_k}
  \end{equation}
  where
  \begin{equation}
    X\ket{\Phi_k}=\kappa_k\ket{\Phi_k}.
  \end{equation}
  As $X$ is Hermitian, the eigenvalues $\kappa_k$ are real, and the
  eigenvectors $\left( \ket{\Phi_k} \right)_K$ can be chosen so that
  they form an ONB. (Note that the spectral decomposition in
  Eq.~\eqref{eq:xspectr} is unique up to a unitary transformation in
  the degenerate subspaces.
  Now let us set
  \begin{eqnarray}
    \label{eq:Binvdef}
    \kket{A^{(k)}} := \sqrt{\kappa_k} \ket{\Phi_k}.
   \end{eqnarray}
   Let us now fix an orthonormal product basis $\ket{j}_1 \ket{i}_2$,
   and define the operators $A^{(k)}$ with their matrix on this ONB by
   inverting the vectorization in Definition~\ref{def:kett}, that is,
   according to Eqs.~\eqref{eq:Choivec} and~\eqref{eq:ChoivecInv}:
  \begin{equation}
     A^{(k)}_{i,j}=\prescript{}{2}{\bra{i}}\prescript{}{1}{\bra{j}}\kket{A^{(k)}}.
     \label{eq:ChoiInvA}
   \end{equation}
   For the so-defined operators, when considered as Kraus operators of
   a CP-map $\mathcal{E}$ as defined in Eq.~\eqref{eq:Kraus},
   Eq.~\eqref{eq:choiact} will obviously hold with $X=X_\mathcal{E}$,
   according to the proof of Proposition~\ref{prop_3}.

   It remains to show that for the operators defined in
   Eq.~\eqref{eq:ChoiInvA}, the condition Eq.~\eqref{eq:Kraus2} will also hold. This is indeed the case as the condition is the necessary and sufficient condition in the Kraus representation for the map to be trace preserving. Because of our prescription $\tr_2 X = \hat 1 _1$ the map is trace preserving, hence, the condition in Eq.~\eqref{eq:Kraus2} holds.

 \end{proof}

 The previous proof has a consequence on the question of the ambiguity
 of the Kraus representation.  Namely, the construction defined by
 Eqs.~\eqref{eq:Binvdef} and~\eqref{eq:ChoiInvA} define an
 \emph{orthogonal Kraus representation}, which is unique up to the
 possible choice of the eigenvectors in the degenerate eigensubspaces.
 It is easy to show that for the Kraus operators defined in Eqs.~\eqref{eq:Binvdef}
 and~\eqref{eq:ChoiInvA},
 \begin{equation}
   \label{eq:ortogkrauss}
   \tr  A^{(k)\dag} A^{(l)} \propto \delta_{k,l}\hat 1
 \end{equation}
 holds. This also implies that
 the Choi operator is unique: each positive semidefinite Hermitian
 operator  $X$ with $\tr_2 X = \hat 1 _1$ defines a unique CPTP map.

The convenience of the Choi representation lies in the fact that it
describes completely positive maps with Hermitian matrices (compliant
with the condition in Eq.~\eqref{eq:tpcondproof} to be trace
preserving). Recall that the CP maps are linear mappings but writing
them in a matrix form (aka left-right representation) helps little
with e.g. parametrizing them. The Kraus representation reveals the
structure but as it consists of arbitrary operators, again it is
rather hard to parametrize.

In addition the Kraus representation is ambiguous. In fact this can be
better understood on the basis of the Choi representation. When
writing the Choi matrix in diagonal form, we can identify a particular
Kraus representation built up from its orthogonal eigenvectors. This
orthogonal Kraus representation is unique up to unitary
transformations in possible degenerate eigensubspaces. All other Kraus
representations of the same CP map arise from writing the Choi matrix
in other bases.

\section{Example: the qubit amplitude damping channel}

Let us provide an elementary example of a typical qubit channel's Choi
matrix as derived from its Kraus representation, and the use for to
calculate the operation of the channel. The calculation described here
is available as a code for the Maxima open-source computer algebra
system~\cite{maxima} as a supplementary material to this article.

The qubit amplitude damping channel has a parameter $p\in[0,1]$, and maps a qubit density matrix written in a qubit basis $\ket{0}, \ket{1}$ as 
\begin{equation}
\varrho^{(\text{in})}=
\begin{pmatrix}\varrho_{0,0} & \varrho_{0,1} \\
\varrho_{1,0} & \varrho_{1,1}
    \end{pmatrix}
\end{equation}
to
\begin{equation}
  \label{eq:adchannel}
  \varrho^{(\text{out})}=
  \begin{pmatrix}p \varrho_{1,1}+\varrho_{0,0} & \sqrt{1-p} \varrho_{0,1} \\
    \sqrt{1-p} \varrho_{1,0} & (1-p) \varrho_{1,1}
    \end{pmatrix}.
  \end{equation}
  In the Kraus representation this channel can be written as~\cite{preskill}
  \begin{eqnarray}
    \label{eq:ADchoi}
    A^{(0)}_{\text{AD}}=\begin{pmatrix}1 & 0\\
      0 & \sqrt{1-p}
    \end{pmatrix},\nonumber \\
    A^{(1)}_{\text{AD}}=\begin{pmatrix}0 & \sqrt{p}\\
      0 & 0\end{pmatrix}.
  \end{eqnarray}
  Now carrying out the vectorization in Eq.~\eqref{eq:Choivec}, and
  using the definition of the Choi matrix in Eq.~\eqref{eq:Choidef},
  we get the Choi matrix of the channel:
  \begin{equation}
    \label{eq:XAD}
    X_{\text{AD}}=
    \begin{pmatrix}1 & 0 & 0 & \sqrt{1-p}\\
      0 & 0 & 0 & 0\\
      0 & 0 & p & 0\\
      \sqrt{1-p} & 0 & 0 & 1-p\end{pmatrix}.
  \end{equation}
  In order to calculate the action of the channel for a density matrix
  $\varrho^{(\text{in})}$, we use Eq.~\eqref{eq:choiact}: first calculate
  the tensor product of the input qubit density matrix and the two-dimensional identity matrix.
then multiply this with $X_{\text{AD}}$ from the
    right, and trace out in the first qubit to get the output:
  \begin{equation}
    \label{eq:ADact}
    \varrho^{\text{(out)}}=\tr_{1}\left(\left(\varrho^{(\text{in})}\otimes \hat 1_2\right)X_{\text{AD}}\right) ,
  \end{equation}
  It is easy to verify that the result coincides with the expected one
  in Eq.~\eqref{eq:adchannel}.

\section{Application: CP map optimization}

As an example application let us recall that one can use the Choi
representation for parametrizing CP maps, i.e. quantum channels that
input quantum states from $\HH_1$ and map them to $\HH_2$.  This is
the parametrization used by Audenaert and De Moor for optimizing
quantum processes~\cite{Audenaert_2002}, and also in one of our
contributions~\cite{1304.1326}.

Let us fix two orthogonal linear bases,
$(\sigma_j)_{j=1\ldots d_1^2-1}$ and $(\tau_k)_{k=1\ldots d_2^2-1}$ in
the linear space of traceless Hermitian matrices in  $d_1$ dimensional Hilbert space $\HH_1$ and $d_2$ dimensional Hilbert space
$\HH_2$; let $\sigma_0$ and $\tau_0$ be the identity operators of the
respective spaces. Then the matrices $(\sigma_i\otimes \tau_k)$ form
an orthogonal basis on $\HH_1\otimes \HH_2$, so we can write an
arbitrary $X: \HH_1\otimes \HH_2 \mapsto \HH_1 \otimes \HH_2$ matrix
as
\begin{equation}
  \label{eq:param}
  X=\sum_{j=0}^{d_1^2-1}\sum_{k=0}^{d_2^2-1}x_{j,k}\sigma_j \otimes \tau_k.
\end{equation}
The condition for preserving trace in Eq.~\eqref{eq:tpcond} implies that
\begin{equation}
  \tr_2 X = \hat 1 =\sigma_0 \leftrightarrow
  \begin{cases}
    x_{j,0}=0&j>0\cr
    x_{0,0}=\frac{1}{d_2}.
\end{cases}
\end{equation}
One can vectorize the $x$ parameters for convenience, use them as
decision variables in an optimization problem. The condition $X\geq 0$
is a semidefinite condition written in the form of a matrix
inequality. Any linear function of the $x$ parameters can thus be
optimized as a linear semidefinite program. Examples include e.g. the
average fidelity of a process to an ideal but nonphysical process one
wants to approximate, as it was done in the aforementioned references.

\section{Product channels}

Multipartite quantum systems often appear in quantum information science. Assume, for instance, that two (possibly separated) parties, Alice and Bob, have access to a part of a bipartite quantum system. The systems may or may not be entangled. As for quantum channels, a reasonable question is the following:
suppose that Alice sends her quantum system through a quantum channel described by a CP map with a Choi matrix $X^{(A)}$, whereas Bob sends his part of the system independently through a channel which is independent of Alice's, and is described by a Choi matrix $X^{(B)}$. This can model, for instance, a Bell-type situation in which the subsystems of an entangled quantum system shared by Alice and Bob are affected by some noise or decoherence independently. We will show that the Choi matrix describing the joint action of the two independent channels on the whole bipartite system is $X^{(A)}\otimes X^{(B)}$.

As for the notation, we will use the labels $A$ and $B$ for the subsystems at Alice and Bob, respectively, whereas $1$ stands for the input space, and $2$ for the output space. Note that there are thus four subsystems in argument, $1A$, $1B$, $2A$, and $2B$.

\begin{proposition}
Given two independent CPTP maps $\EE_A$ and $\EE_B$, represented by Choi matrices $X^{(A)}$ and $X^{(B)}$, respectively, and a bipartite product state as an input
\begin{equation}
\varrho=\varrho_{1A,1B} = \varrho_{1A} \otimes \varrho_{1B},
\end{equation}
for the CPTP map $\EE_{AB}$ represented by the Choi matrix $X_A\otimes X_B$,
\begin{equation}
\EE_{AB}(\varrho)=\EE_{A}(\varrho_{A})\otimes \EE_{B}(\varrho_{B})
\end{equation}
holds.
\end{proposition}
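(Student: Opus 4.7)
The plan is to apply Proposition~\ref{prop_3} directly to the joint Choi matrix $X^{(A)} \otimes X^{(B)}$ and to exploit how tensor products interact with transposition and partial trace. First I would plug the product input $\varrho_{1A} \otimes \varrho_{1B}$ and the joint identity $\hat 1_{2A} \otimes \hat 1_{2B}$ into Eq.~\eqref{eq:choiact}, using the elementary identity $(\varrho_{1A} \otimes \varrho_{1B})^\top = \varrho_{1A}^\top \otimes \varrho_{1B}^\top$ to obtain
\begin{equation*}
\EE_{AB}(\varrho) = \tr_{1A,1B}\!\bigl((\varrho_{1A}^\top \otimes \varrho_{1B}^\top \otimes \hat 1_{2A} \otimes \hat 1_{2B})(X^{(A)} \otimes X^{(B)})\bigr).
\end{equation*}

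Next I would reorder the tensor factors so that all $A$-slots are grouped with $A$-slots and all $B$-slots with $B$-slots. After this regrouping both the prefactor and the Choi matrix factorize as an $A$-operator times a $B$-operator, and their ordinary product on $(\HH_{1A}\otimes\HH_{2A})\otimes(\HH_{1B}\otimes\HH_{2B})$ is
\begin{equation*}
\bigl[(\varrho_{1A}^\top \otimes \hat 1_{2A}) X^{(A)}\bigr] \otimes \bigl[(\varrho_{1B}^\top \otimes \hat 1_{2B}) X^{(B)}\bigr].
\end{equation*}
Since partial trace satisfies $\tr_{1A,1B}(M_A \otimes M_B) = \tr_{1A}(M_A) \otimes \tr_{1B}(M_B)$ by linearity and by the defining property $\tr(M_A \otimes M_B) = (\tr M_A)(\tr M_B)$, tracing out both input spaces produces a tensor product of two single-party expressions of the form treated in Proposition~\ref{prop_3}, so the result equals $\EE_A(\varrho_{1A}) \otimes \EE_B(\varrho_{1B})$, as claimed. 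For completeness I would also remark that $X^{(A)} \otimes X^{(B)}$ is indeed a legitimate Choi matrix: it is positive semidefinite as a tensor product of positive semidefinite operators, and $\tr_{2AB}(X^{(A)} \otimes X^{(B)}) = \hat 1_{1A} \otimes \hat 1_{1B}$, so Proposition~\ref{prop_6} applies.

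The main obstacle will be purely bookkeeping of the tensor ordering. The Choi matrix of the joint channel is, by definition, an operator on $(\HH_{1A} \otimes \HH_{1B}) \otimes (\HH_{2A} \otimes \HH_{2B})$, whereas $X^{(A)} \otimes X^{(B)}$ is naturally an operator on $(\HH_{1A} \otimes \HH_{2A}) \otimes (\HH_{1B} \otimes \HH_{2B})$; reconciling the two requires an implicit swap of the two middle factors. The steps above are conceptually trivial, but one must track this swap carefully so that each operator acts on the intended Hilbert-space slot and each partial trace is taken over the correct factor. Once the ordering is made explicit, for instance by writing every object in an index-labeled basis $\ket{j_{1A},j_{1B},i_{2A},i_{2B}}$, the calculation collapses to the two applications of Proposition~\ref{prop_3} sketched above.
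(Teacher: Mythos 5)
Your proposal is correct and follows essentially the same route as the paper: both insert $X^{(A)}\otimes X^{(B)}$ into Eq.~\eqref{eq:choiact} for the product input and let transposition and the partial trace factorize over the tensor product into the two single-party expressions of Proposition~\ref{prop_3}. The paper simply carries out, with explicitly labeled indices $i_{1A},i_{2A},i_{1B},i_{2B},\dots$, the very bookkeeping (regrouping the middle tensor factors and splitting $\tr_{1A,1B}$) that you state as operator identities, so the index-labeled computation you mention as a fallback is in fact the paper's own proof.
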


%\begin{eqnarray}
%\textbf{i}_{B}:=(i_{1B},i_{2B}) \nonumber \\
%\textbf{i}_{A}:=(i_{1A},i_{2A}) \nonumber \\
%\textbf{i}_{B}:=(i_{1B},i_{2B}) \nonumber \\
%\textbf{j}_{A}:=(j_{1A},j_{2A}) \nonumber \\
%\textbf{j}_{B}:=(j_{1B},j_{2B}) \nonumber \\
%\textbf{i}_{AB}:=(i_{1A},i_{2B}) \nonumber \\
%\textbf{j}_{AB}:=(j_{1A},j_{2B}) \nonumber \\
%\textbf{i}_1:=(i_{1A},i_{1B}) \nonumber \\
%\textbf{i}_2:=(i_{2A},i_{2B}) \nonumber \\
%\textbf{j}_1:=(j_{1A},j_{1B}) \nonumber \\
%\textbf{j}_2:=(j_{2A},j_{2B}) \nonumber \\
%\textbf{A}_{ij}:=(i_{1A},j_{1A}) \nonumber \\
%\textbf{B}_{ij}:=(i_{2B},j_{2B}) \nonumber \\
%\textbf{A}:=(i_{1A},i_{2A};j_{1A},j_{2A}) \nonumber \\
%\textbf{B}:= (i_{1B},i_{2B};j_{1B},j_{2B}) \nonumber \\
%\textbf{I}:=(i_{1A},i_{2A};i_{1B},i_{2B}) \nonumber \\
%\textbf{J}:= (j_{1B},j_{2B};j_{1B},j_{2B})  \nonumber \\
%\end{eqnarray}

%\begin{eqnarray}
%(X_A)_{\textbf{i}_A,\textbf{j}_A}=(X_A)_{i_{1A},i_{2A};j_{1A},j_{2A}} \nonumber \\
%(X_B)_{\textbf{i}_B,\textbf{j}_B}=(X_B)_{i_{1B},i_{2B};j_{1B},j_{2B}} \nonumber \\
%\end{eqnarray}
\begin{proof}

The statement can be shown by straightforward direct calculation: writing $X$ with coordinates (using the subsystem labels for the coordinate indices which have been introduced above):
\begin{eqnarray}
X=(X_A \otimes X_B)_{i_{1A},i_{2A};i_{1B},i_{2B}|j_{1A},j_{2A};j_{1B},j_{2B}}  \nonumber \\
=(X_A)_{i_{1A},i_{2A};j_{1A},j_{2A}}(X_B)_{i_{1B},i_{2B};j_{1B},j_{2B}}, 
\end{eqnarray}
and thus we can write the effect of $\EE$ on $\varrho$ as
\begin{eqnarray}
\EE(\varrho) = \tr_{1A,1B}  \left(  (\varrho_{1A,1B} \otimes \hat{1}_{2A,2B}) X^{(AB)} \right)= \varrho'^{(2A,2B)}= \varrho'_{i_{2A},i_{2B};j_{2A},j_{2B}}.
\end{eqnarray}
On the other hand we have
\begin{eqnarray}
\EE_A(\varrho^{(A)}) \otimes \EE_B(\varrho^{(B)}) =  \nonumber \\ \sum_{i_{1A},i_{1B}} \left[ \left( \varrho^\top_{i_{1A},i_{1B}}  \otimes \hat{1}_{2A,2B} \right) X \right]_{i_{1A},i_{2A},i_{1B},i_{2B}|i_{1A},j_{2A},i_{1B},j_{2B}}  \nonumber \\
=\sum_{i_{1A},i_{1B}} \sum_{k_{1A},k_{2A},k_{1B},k_{2B}} \left( \varrho^\top \otimes \hat{1}_{2A,2B} \right)_{i_{1A},i_{2A},i_{1B},i_{2B}|k_{1A},k_{2A},k_{1B},k_{2B}} \times \nonumber\\
 X_{k_{1A},k_{2A},k_{1B},k_{2B}|i_{1A},j_{2A},i_{1B},j_{2B}} \nonumber \\
= \sum_{i_{1A},i_{1B},k_{1A},k_{2A},k_{1B},k_{2B}} \left( \varrho^\top \right)_{i_{1A},i_{1B},k_{1A},k_{1B}} \delta_{i_{2A},k_{2A}} \delta_{i_{2B},k_{2B}} X_{k_{1A},k_{2A},k_{1B},k_{2B}|i_{1A},j_{2A},i_{1B},j_{2B}} \nonumber \\
=  \sum_{i_{1A},i_{1B},k_{1A},k_{2A},k_{1B},k_{2B}} \delta_{i_{2A},k_{2A}} \delta_{i_{2B},k_{2B}} \varrho^{(A)}_{k_{1A},i_{1A}}\varrho^{(B)}_{k_{1B},i_{1B}} X^{(A)}_{k_{1A},k_{2A};i_{1A},j_{2A}} X^{(B)}_{k_{1B},k_{2B};i_{1B},j_{2B}}  \nonumber \\
= \sum_{i_{1A},k_{1A},k_{2A}}\sum_{i_{1B},k_{1B},k_{2B}} \varrho^{(A)}_{k_{1A},i_{1A}} \delta_{i_{2A},k_{2A}} X^{(A)}_{k_{1A},k_{2A};i_{1A},j_{2A}} \varrho^{(B)}_{k_{1B},i_{1B}} \delta_{i_{2B},k_{2B}} X^{(B)}_{k_{1B},k_{2B};i_{1B},j_{2B}} \nonumber \\
= \left( \sum_{i_{1A},k_{1A}}  \varrho^{(A)}_{k_{1A},i_{1A}} X^{(A)}_{k_{1A},i_{2A};i_{1A},j_{2A}} \right) \left( \sum_{i_{1B},k_{1B}}  \varrho^{(B)}_{k_{1B},i_{1B}}X^{(B)}_{k_{1B},i_{2B};i_{1B},j_{2B}} \right) \nonumber \\
=\left[ \left(  \tr_{1A}  \Big\{  \left( \varrho^\top_{1A} \otimes \hat{1}_{2A} \right) X^{(A)} \Big\} \right)  \otimes \left(\tr_{1B} \Big\{ \left( \varrho^\top_{1B} \otimes \hat{1}_{2B} \right) X^{(B)} \Big\} \right) \right]_{i_{2A},i_{2B},j_{2A},j_{2B}}. \nonumber \\
\end{eqnarray}
\end{proof}
The linearity of CPTP maps implies that the bipartite channel arising as the parallel application of independent channels is represented by the direct product of the two channels' Choi matrices.

\section{Discussion and conclusions}

Based on the knowledge of the Kraus representation of CPTP maps we
have derived their Choi representation via basic calculations, and
have studied some of its properties and their implications. These
techniques can be helpful e.g. when analyzing systems which involve
arbitrary quantum channels or when optimizing over general quantum
channels. Apart from this the provided details can help to understand
certain details of completely positive maps.

We also remark here that throughout this work we have used the term
'completely positive' in the sense of linear operators mapping
Hermitian operators to Hermitian operators, and their positivity being
related to preservation of positive semidefiniteness. To avoid
confusion it should be noted that 'completely positivity' of linear
operators can have another meaning: especially in the literature of
mathematical optimization completely positive linear operators (on a
real vector space) are defined~\cite{Berman_2003} in the element-wise
positivity sense as
\begin{equation}
  \label{eq:mathCP}
  \mathcal{O}=\sum_k x^{(k)}x^{(k)\top},\quad x_k\in \mathbb{R}_+^n,
\end{equation}
i.e. as a convex combination of outer products of real vectors of the
positive orthant. These completely positive operators  form a cone, and
they have many applications~\cite{Berman_2003,
  10.1007/978-3-642-12598-0_1}, with implications to
quadratic and combinatorial optimization. However, there is no
connection known between this latter notion of 'complete positivity' and the
one discussed in the present tutorial.

\section*{Author contribution}

All the authors have accepted responsibility for the entire content of
this submitted manuscript and approved submission.

\section*{Research funding}

This research was supported by the National Research, Development, and Innovation Office of Hungary under the "Frontline" Research Excellence Program, (Grant. No. KKP 133827), and Project no. TKP2021-NVA-04. This project has received funding from the European Union under grant agreement No. 101081247 (QCIHungary project) and has been implemented with the support provided by the Ministry of Culture and Innovation of Hungary from the National Research, Development and Innovation Fund. 
A. O. acknowledges support from the program 'Apoyos para la
Incorporación de Investigadoras e Investigadores Vinculada a la
Consolidación Institucional de Grupos de Investigación 2023' from
CONAHCYT, Mexico.

\section*{Conflict of interest statement}
The authors declare no conflicts of interest regarding this article.

\section*{Acknowledgements}

The authors are indebted to A.~Frigyik, A.~C.~Reynoso and \'A.~T\'oth
for helpful discussions.

%\bibliographystyle{abbrv}
%\bibliography{Choi}

\end{document}